\renewcommand*\env@matrix[1][*\c@MaxMatrixCols c]{%
  \hskip -\arraycolsep
  \let\@ifnextchar\new@ifnextchar
  \array{#1}}
\theoremstyle{plain}
\newtheorem{theorem}{Theorem}
\newtheorem{corollary}[theorem]{Corollary}
\newtheorem{proposition}[theorem]{Proposition}
\newtheorem{assum}{Assumptions}
\newtheorem{definition}{Definition}
\theoremstyle{nonumberplain}
\theoremstyle{plain}
\newtheorem{remark}{Remark}
\theoremstyle{plain}
\theoremstyle{nonumberplain}
\newlength\fheight
\newlength\fwidth
\newcommand{\abs}[1]{\left| #1 \right|}
\newcommand{\norm}[1]{\| #1 \|}
\newcommand{\inn}[2]{\langle #1,#2 \rangle}
\newcommand{\lrbrace}[1]{\left\{ #1 \right\}}
\newcommand{\normi}[1]{{\left\vert\kern-0.25ex\left\vert\kern-0.25ex\left\vert #1 
    \right\vert\kern-0.25ex\right\vert\kern-0.25ex\right\vert}}
\newcommand{\inni}[2]{{\langle\kern-0.25ex\langle #1,#2
   \rangle\kern-0.25ex\rangle}}
\def\X{\mathcal{X}}
\def\nat{ \mathbb{N} }								
\def\real{ \mathbb{R} }								
\def\V{\mathcal{V}}
\def\X{\mathcal{X}}
\DeclareMathOperator{\VI}{\text{VI}}
\DeclareMathOperator{\VS}{\text{VS}}
\DeclareMathOperator{\SOL}{\text{SOL}}
\DeclareMathOperator*{\argmax}{arg\,max}
\newcommand\restr[2]{{
  \left.\kern-\nulldelimiterspace 
  #1 
  \right|_{#2} 
  }}
\def\Z{\mathcal{Z}}
\title{\LARGE \bf
	Semi-Decentralized Coordinated Online Learning for Continuous Games with Coupled Constraints via Augmented Lagrangian
}
\author{Ezra Tampubolon and Holger Boche
	\thanks{E. Tampubolon and H. Boche is with the Department of Electrical and Computer Engineering,
		Technical University of Munich,  80333 Munich, Germany
		{\tt\small \{ezra.tampubolon,boche\}@tum.de}}%
}
\begin{document}
%
\maketitle
%
\thispagestyle{empty}
\pagestyle{empty}
\begin{abstract}
We consider a class of concave continuous games in which the corresponding admissible strategy profile of each player underlies affine coupling constraints. We propose a novel algorithm that leads the corresponding population dynamic toward Nash equilibrium. This algorithm is based on a mirror ascent algorithm, which suits with the framework of no-regret online learning, and on the augmented Lagrangian method. The decentralization aspect of the algorithm corresponds to the aspects that the iterate of each player requires the local information of about how she contributes to the coupling constraints and the price vector broadcasted by a central coordinator. So each player need not know about the population action. Moreover, no specific control by the central coordinator is required. We give a condition on the step sizes and the degree of the augmentation of the Lagrangian, such that the proposed algorithm converges to a generalized Nash equilibrium.
\end{abstract}
%

\section{Introduction}
Competitive non-cooperative selfish agents appear as a model in a vast number of applications (see also \cite{Scutari2012}) such as smart grid \cite{Mohsenian2010,Saad2012,Deng2014,Li2016,Li2016(2),Grammatico2015,Ma2013,Parise2014,Ma2016,Grammatico20162}, competitive markets \cite{Li2015}, and congestion control for networks \cite{Barrera2015}. The famous concept of non-cooperative continuous game theory is
suited to analyze such applications: The typical setting is that a set of agents repeatedly interact with each other, in the sense that at time $t$ the payoff/reward of an agent depends not only on his action but also on the joint action of all other agents which is not visible for him.

Given the uncertainty faced by one agent about the joint action of the others, he has therefore to choose his action in an online manner aiming to optimize his time-variant reward (for detailed discussions see \cite{Belmaga2018} and references therein). A reasonable assumption on their behavior is that they apply the no-regret policy (see, e.g., \cite{Shalev-Shwartz2012,Belmaga2018}) known in the literature of online learning. The canonical class of no-regret policies in the black-box environment, i.e., in the environment where no further assumptions on the utility functions $(u_{i})_{i}$ other than concavity, 
is the so-called online mirror ascent which is a canonical extension of famous mirror ascent algorithm. Mirror ascent consists at each time instance the gradient step in the dual space and the "mirror" step which maps it back to the feasible primal region.

In applications, there is often a coupling between the ideal agent's action set and the actual other agents' action. Some examples are TCP (congestion) control problem \cite{Low1}, where the transmit rates of the agents underly capacity link constraints, the problem of charging of electrical vehicles \cite{Ma2013}, where a threshold of the total demand of the units of power of the agents is available, and MIMO interference systems (see e.g. \cite{Scutari2012} and references therein), where the transmission strategies of the secondary users (agents) in the form of power allocation vectors over the sub-carriers underly sum constraint. Notice that in all the previously mentioned applications, the coupled constrained is of affine form. 

In this work, we consider a novel mirror ascent based algorithms for concave games that can handle coupled affine constraints. In each time step, each agents executes a mirror ascent update which requires locally available first-order information of their utility function and the price vector broadcasted by the central coordinator. The latter is updated by the central coordinator using the augmented-Lagrangian-based update. We give a sufficient condition on the (non-adaptive) step size sequences of the agents and of the central coordinator and on the augmentation of the Lagrangian such that the proposed algorithm converges to a (variationally stable) Nash equilibrium.   

As a matter of course, the proposed algorithm can also be used by a system designer to design agent control algorithms in order to generate a desirable collective behavior in the case that the latter coincides with the Nash equilibrium of the considered coupled constrained game. This is done by correspondingly designing localized agent utility function and designing a related coordination strategy. According to our recommendation, the system designer might realize the latter based on iterative communications with a central coordinator that can gather and broadcast information from/to the population. The motivation for this semi-decentralized approach arises from the privacy demand between the agents, and the computational intractability of a fully centralized solution.

\subsubsection*{Relation to Prior Works} 
\cite{Mertikopoulos2018} provides, among other things, a deterministic analysis of the online mirror ascent algorithm for games with a continuous action set. In contrast to this work, we consider games with continuous action sets which underlies in addition to coupling constraint so that the admissible set of population strategy profile is not necessarily of product structure. We have to modify the decentralized algorithm given in \cite{Mertikopoulos2018} and make use of a central coordinator in order to handle such additional constraints. For this reason, our work is an extension of the deterministic result given in \cite{Mertikopoulos2018}.

\cite{Paccagnan2017} gives the most recent semi-decentralized first-order algorithm for finding equilibrium and handling coupled constraints.There, the authors leverage mostly from the fixed-point method for finding the solution of variational inequality (see e.g., Chapter 12 in \cite{Facchinei1}), which results in (Euclidean-)projection-based algorithm. Our algorithm is based more generally on the mirror map, which constitutes a generalization of the Euclidean projection. For this reason, we are not able to use the usual fixed-point approach for variational inequality. Moreover, the asymmetric algorithm proposed in \cite{Paccagnan2017} uses constant step size in contrast to our algorithm, which uses variable step size.

We take the inspiration for the method of augmentation of the Lagrangian from the work \cite{Mahdavi1}. There, the authors provide an algorithm for online optimization with sub-linear regret bound able to handle constraint. The method of augmentation of the Lagrangian helps to obtain a sub-linear bound on the violation of constraints.

\subsubsection*{Basic Notations}
In this work we consider always the linear Euclidean space $\real^{D}$. The projection onto the closed convex subset $A$ of $\real^{D}$ is denoted by $\Pi_{A}$. The dual norm of a norm $\norm{\cdot}$ on $\real^{D}$ is denoted by $\norm{\cdot}_{*}$. $F:\real^{D}\rightarrow\real^{D}$ is said to be Lipschitz continuous on an a non-empty subset $\Z\subset (\real^{D},\norm{\cdot})$ with constant $L>0$ if $$\norm{F(x)-F(z)}_{*}\leq L\norm{x-z},\quad\forall x,z\in\mathcal{Z}$$. $F$ is said to be monotone on $\Z$ if: $$\inn{x_{1}-x_{2}}{F(x_{1})-F(x_{2})}\leq 0,$$
for all $x_{1},x_{2}\in \mathcal{Z}$. If in the latter strict inequality hold for $x_{1}\neq x_{2}$, then $F$ is said to be strictly monotone.

\section{Model Descriptions}
We consider a non-cooperative game (NG) $\Gamma$ played by a finite set of players $[N]=\lrbrace{1,\ldots,N}$. During the game each player $i\in[N]$ can choose an action/strategy $x^{(i)}$ from a non-empty compact convex subset $\X_{i}$ of a finite-dimensional normed space $(\real^{D_{i}},\norm{\cdot}_{i})$. 
A usual assumption on the action set is the following:
\begin{assum}
$\X_{i}$ is a non-empty compact convex subset of a finite dimensional space $\mathcal{V}_{i}$.
\end{assum} 
The payoff/reward for player $i\in [N]$ is given by the function $u_{i}:\X\rightarrow\real$, where $\X=\prod_{i}\X_{i}$, and the actual action/strategy-profile $x=(x^{(1)},\ldots,x^{(N)})\in\X:=\prod_{i}\X_{i}$. If we work with the whole population, we consider the normed space $(\prod_{i=1}^{N}\real^{D_{i}},\normi{\cdot})$, where $\normi{x}:=\sum_{i}\norm{x^{(i)}}_{i}$. In order to highlight the action of player $i$ we often write $x=(x^{(i)},x^{(-i)})$ where $\mathbf{x}^{(-i)}=(x^{(j)})_{j\neq i}$. Moreover, we mostly assume in this work the following regularity condition for the utility functions:
\begin{assum} 
\label{Ass:aiishhfggfjhhdjjdhhdjjddd}
For all $i\in [N]$ and $x^{(-i)}\in \X_{-i}$,
 $u_{i}((\cdot),x^{(-i)})$ is 
      concave  and $$v_{i}(x):=\nabla_{x^{(i)}}u_{i}(x)$$ is continuous. 
\end{assum}

In this work, we are specifically interested in NG $\Gamma$ with coupled constraints (NGCC), i.e. in NG $\Gamma$ which is in addition subject to coupled inequality constraints $$\mathcal{C}:= \{x\in\real^{N}:~g_{i}(x)\leq 0,~ i\in [M]\},$$ or in the vectorized form $\mathcal{C}:=\{g(x)\leq 0\}$. So the set of feasible strategy profile is $$\mathcal{Q}:=\mathcal{C}\cap\mathcal{X}$$ which we assumed to be non-empty, and correspondingly the set of feasible strategy for player $i$ is $$\mathcal{Q}^{(i)}(x^{(-i)}):=\{x^{(i)}\in\mathcal{X}_{i}:~g(x)\leq 0,~\}.$$ We denote NGCC by $\Gamma=([N],u,\X,\mathcal{C})$, where $u:=(u_{1},\ldots,u_{N})$.
For simplicity, we consider more specifically linear constraint, where $$g(x)=Ax-b ~\text{with}~ A=[A_{(:,1)},\ldots,A_{(:,N)}]\in\real^{M\times \sum_{i=1}^{N}D_{i}},$$ where $A_{(:,j)}\in \real^{M\times D_{j}}$, and where $b\in\real^{M}$. In this case, each agents is assume to now only its contribution to the inequality constraints, which means that $A_{(:,i)}$ is only visible to agent $i$. 

The following regularity condition on $\mathcal{Q}$ is useful for later purposes:
\begin{assum}[Slater's condition]
\label{Eq:aioaoosjjshhdjjddss}
There exists $x_{*}\in\text{relint}{(\X)}$ s.t. $Ax_{*}< b$, where $\text{relint}(\X)$ denotes the relative interior of $\X$.
\end{assum}

\subsubsection*{Mirror Map and Fenchel Coupling}
In general, first order evolution takes place in the dual space. So, in order to realize their actions, the agents need a mapping to project the iterate back to their individual constraint sets. A canonical way to do this is by means of the following:
\begin{definition}[Regularizer/penalty fct. and Mirror Map]
Let $\Z$ be a compact convex subset of a normed space $(E,\norm{\cdot})$, and $K>0$. We say $\psi:\Z\rightarrow\mathbb{R}$ is a $K$-strongly convex \textit{regularizer} (or \textit{penalty function}) on $\Z$, if $\psi$ is continuous and $K$-strongly convex on $\Z$, in the sense that for all $x,y\in \mathcal{X}$ and $\lambda\in [0,1]$:
\begin{equation*}
\psi(\lambda x+(1-\lambda)y)\leq \lambda \psi(x)+(1-\lambda)\psi(y)-\tfrac{K}{2}\lambda (1-\lambda)\norm{x-y}^{2}.
\end{equation*}
The mirror map $\Phi:E^{*}\rightarrow\Z$ induced by $\psi$ is defined by:
\begin{equation*}
\Phi(y):=\argmax\limits_{x\in\mathcal{X}}\left\{\left\langle x,y\right\rangle-\psi(x)\right\}.
\end{equation*}
\end{definition}
In case that the convex conjugate: $$\psi^{*}(y)=\max_{x\in\Z}\left\{\left\langle y,x\right\rangle-\psi(x)\right\}$$ of the $K$-strongly convex regularizer $\psi$ on $\mathcal{Z}$ is known, one can compute the mirror map by $\Phi=\nabla \psi^{*}$. Moreover, it can be shown that $\Phi$ is $1/K$-Lipschitz continuous. 
For a proof of those facts, see e.g. Theorem 23.5 in \cite{Rockafellar1970} and Theorem 12.60(b) in \cite{Rockafellar1998}. 

Mirror map constitutes a generalization of the usual Euclidean projection operator. Interesting example of mirror map is the so called logit choice:
 $$\Phi(y)=\frac{\exp(y)}{\sum_{l=1}^{D}\exp(y_{l})}$$ which is generated by the penalty function: $$\psi(x)=\sum_{k=1}^{D}x_{k}\log x_{k},$$ known as the Gibbs entropy, on the simplex $\Delta\subset(\real^{D},\norm{}_{1})$.

As noticed in \cite{Mertikopoulos2016}, a convex regularizer induces canonically the following notion of "distance":
\begin{definition}[Fenchel Coupling \cite{Mertikopoulos2016}]
Let $\psi:\X\rightarrow\real$ be a penalty function on $\X$. Then the Fenchel coupling induced by $\psi$ is defined as
\begin{equation*}
F(p,y)=\psi(p)+\psi^{*}(y)-\inn{p}{y},\quad p\in \X,~y\in E^{*}.
\end{equation*}
\end{definition}
Some useful properties of the Fenchel coupling is stated in the following (for proof see \cite{Mertikopoulos2016}):
\begin{proposition}
\label{Prop:aaisshhfjffjfjfff}
Let $F$ be the Fenchel coupling induced by a $K$-strongly convex regularizer of $\X$. For $p\in\mathcal{X}$, $y,y^{'}\in \mathcal{V}^{*}$, we have:
\begin{enumerate}
\item $F(p,y)\geq (K/2)\norm{\Phi(y)-p}^{2}$
\item $F(p,y^{'})\leq F(p,y)+\inn{y^{'}-y}{\Phi(y)-p}+(1/2K)\norm{y^{'}-y}^{2}_{*}$
\end{enumerate}
\end{proposition}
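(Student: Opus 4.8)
The plan is to treat the two inequalities separately, deriving both from the variational characterization of the mirror map together with the $K$-strong convexity of $\psi$ (equivalently, the smoothness of $\psi^{*}$). No heavy machinery is needed beyond first-order optimality and a one-dimensional integration.

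For the first inequality, I would set $x^{*}:=\Phi(y)$, so that by definition of the mirror map $x^{*}$ maximizes $x\mapsto\inn{x}{y}-\psi(x)$ over $\X$ and $\psi^{*}(y)=\inn{x^{*}}{y}-\psi(x^{*})$. Substituting this into the definition of the Fenchel coupling gives $F(p,y)=\psi(p)-\psi(x^{*})-\inn{p-x^{*}}{y}$. First-order optimality for this concave maximization yields a subgradient $g\in\partial\psi(x^{*})$ with $\inn{g-y}{x-x^{*}}\geq 0$ for every $x\in\X$; taking $x=p$ makes the resulting cross term nonnegative. The $K$-strong convexity of $\psi$ supplies the quadratic lower bound $\psi(p)-\psi(x^{*})\geq\inn{g}{p-x^{*}}+\tfrac{K}{2}\norm{p-x^{*}}^{2}$. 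Combining these two facts and discarding the nonnegative term $\inn{g-y}{p-x^{*}}$ leaves exactly $F(p,y)\geq\tfrac{K}{2}\norm{\Phi(y)-p}^{2}$.

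For the second inequality the point $p$ enters only linearly, so I would first compute $F(p,y^{'})-F(p,y)=\psi^{*}(y^{'})-\psi^{*}(y)-\inn{p}{y^{'}-y}$, reducing matters to controlling the increment of $\psi^{*}$. Here I would invoke the facts recorded before the proposition, namely $\Phi=\nabla\psi^{*}$ and that $\Phi$ is $1/K$-Lipschitz. Writing $\psi^{*}(y^{'})-\psi^{*}(y)=\int_{0}^{1}\inn{\Phi(y+t(y^{'}-y))}{y^{'}-y}\,\di t$ and inserting and subtracting $\inn{\Phi(y)}{y^{'}-y}$, the Fenchel--Young/Cauchy--Schwarz bound $\inn{a}{b}\leq\norm{a}\norm{b}_{*}$ together with the Lipschitz estimate $\norm{\Phi(y+t(y^{'}-y))-\Phi(y)}\leq (t/K)\norm{y^{'}-y}_{*}$ yields the descent inequality $\psi^{*}(y^{'})\leq\psi^{*}(y)+\inn{\Phi(y)}{y^{'}-y}+\tfrac{1}{2K}\norm{y^{'}-y}_{*}^{2}$. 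Subtracting $\inn{p}{y^{'}-y}$ and regrouping the two pairings into $\inn{y^{'}-y}{\Phi(y)-p}$ gives the claim.

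The algebraic substitutions and the integral estimate are routine. The one point demanding care is the first-order optimality step in Part 1 when $\psi$ is merely strongly convex rather than differentiable: there one must use the subdifferential characterization of the constrained maximizer (equivalently, that $y\in\partial\psi(x^{*})+N_{\X}(x^{*})$) and select the subgradient $g$ consistent with that inclusion, so that $\inn{g-y}{p-x^{*}}$ is guaranteed nonnegative for the feasible point $p$. Once this subgradient selection is pinned down, both estimates follow mechanically.
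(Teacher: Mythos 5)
Your proof is correct. The paper itself does not prove this proposition but defers to the cited reference (Mertikopoulos--Sandholm), and your argument --- deriving Part 1 from the first-order optimality of $\Phi(y)$ plus the $K$-strong convexity of $\psi$, and Part 2 from the descent-type inequality for $\psi^{*}$ via the $1/K$-Lipschitz continuity of $\Phi=\nabla\psi^{*}$ --- is essentially the standard proof given there, including the care you take with the subgradient selection in the constrained maximization.
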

 
Throughout this work, we assume that each agent $i\in [N]$ possess a $K_{i}$-strongly convex regularizer $\psi_{i}$ which induces the mirror map $\Phi_{i}$, and the Fenchel coupling $F_{i}$. In order to emphasize the action of the whole population, we sometimes use the operator $\mathbf{\Phi}:\prod_{i}\real^{D_{i}}\rightarrow\X$, $y\mapsto (\Phi_{1}(y^{(1)}),\ldots,\Phi_{N}(y^{(N)}))$ and the "total" Fenchel coupling $F^{N}:\X\times\prod_{i}\real^{D_{i}}$, $(x,y)\mapsto\sum_{i}F_{i}(x_{i},y_{i})$.

\subsubsection*{Algorithm}

The evolution of the agents which we concerned with in this work is given in the following:
\begin{algorithm}
\label{Alg:aoaishhjddhhdddd}
\caption{Mirror ascent with Augmented Lagrangian (MAAL)}
\begin{algorithmic}
\REQUIRE Step size sequence $(\gamma_{t})_{t}$, augmentation sequence $(\theta_{t})_{t}$, initial dual action $Y_{0}^{(i)}\in\V^{*}_{i}$, and initial dual variable $\lambda_{0}$.
\FOR{$t=1,2,\ldots$}
\FOR{every player $i\in [N]$ }
\STATE Play $X^{(i)}_{t}\leftarrow \Phi_{i}(Y^{(i)}_t)$
\STATE Observe $v_{i}(X_{t})$
\STATE Update $Y^{(i)}_{t+1}\leftarrow Y_{t}^{(i)}+\gamma_{t}(v_{i}(X_{t})-A^{T}_{(:,i)}\lambda_{t})$
\ENDFOR

\STATE Central operator update:
   \begin{equation*}
	 \lambda_{t+1}\leftarrow\Pi_{\real^{M}_{\geq 0}}\left(\lambda_{t}+\gamma_{t}\left[\left(A X_{t}-b\right)-\theta_{t}\lambda_{t}\right]\right)
	 \end{equation*}
\STATE Central operator broadcast $\lambda_{t+1}$ to all players. 
\ENDFOR
\end{algorithmic}
\end{algorithm}

The difference between usual online mirror ascent is that the gradient update for agent $i$ has an additional term involving his contribution to the constraint set ($A^{T}_{(:,i)}$) and the price vector $\lambda_{t}$ provided by the central coordinator. For this reason, we speak of semi-decentralized update. The price vector is updated via projected gradient ascent for maximizing the augmented Lagrangian dual objective. The aspect of augmentation of Lagrangian is reflected in the term $-\theta_{t}\lambda_{t}$. By this reason, $(\theta_{t})_{t}$ is called augmentation sequence.

\section{Variational Description of Equilibriums}
\label{Sec:aoaosjjskdjdjdkkdd}
A classical notion of equilibrium is the Nash equilibrium. It describes the state in which no agent can increase his payoff by unilaterally changing his strategy:
\begin{definition}[Nash Equilibrium]
$x_{*}\in\mathcal{C}$ is a Nash equilibrium of the NGCC $\Gamma=([N],u,\X,\mathcal{C})$, if for every $i\in [N]$:
\begin{equation}
\label{Eq:aiaishhdjhdggshhggaa}
u_{i}(x_{*})\geq u_{i}(x^{(i)},x_{*}^{(-i)}),\quad\forall x^{(i)}\in\mathcal{Q}^{(i)}(x^{(-i)}_{*})
\end{equation}
\end{definition}
\subsection{Variational Inequality and Nash equilibrium}
Rather than with the concept of Nash equilibrium, it is advantageous from the analytical point of view to work with the concept of the so-called variational inequality (VI):
\begin{definition}
Let $\Z$ be a subset of a finite dimensional normed space $(E,\norm{\cdot})$, and suppose that $F:\mathcal{Z}\rightarrow E^{*}$. A point $\overline{x}\in\Z$ is a solution of the variational inequality $\text{VI}(\mathcal{Z},F)$, if $\left\langle x-\overline{x} ,F(\overline{x})\right\rangle\leq 0$, $\forall x\in\mathcal{Z}$. The set of solution of $\VI(\X,F)$ is denoted by $\SOL(\X,F)$.
\end{definition}
The usual first order optimality condition for convex optimization asserts the following relation between two concepts:
\begin{proposition}
\label{Prop:aaoaojssjhhdjdhsss}
If Assumption \ref{Ass:aiishhfggfjhhdjjdhhdjjddd} holds, then $\SOL (\mathcal{Q},v)$ is a subset of the set of Nash equilibriums.
\end{proposition}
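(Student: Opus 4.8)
The plan is to derive the Nash inequality~\eqref{Eq:aiaishhdjhdggshhggaa} for each player by testing the variational inequality against \emph{single-coordinate} deviations and then invoking the first-order characterization of concavity. I would fix $\overline{x}\in\SOL(\mathcal{Q},v)$; since $\mathcal{Q}=\mathcal{C}\cap\X\subseteq\mathcal{C}$, the candidate already satisfies the membership requirement $\overline{x}\in\mathcal{C}$ of the Nash definition. Next I would fix an arbitrary player $i\in[N]$ and an arbitrary admissible deviation $x^{(i)}\in\mathcal{Q}^{(i)}(\overline{x}^{(-i)})$, and form the composite profile $x:=(x^{(i)},\overline{x}^{(-i)})$.

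The first substantive step is to check that this composite profile is itself feasible, i.e.\ $x\in\mathcal{Q}$. By definition of $\mathcal{Q}^{(i)}(\overline{x}^{(-i)})$ we have $x^{(i)}\in\X_{i}$ together with $g(x^{(i)},\overline{x}^{(-i)})\leq 0$, so that $x\in\mathcal{C}$; combined with $\overline{x}^{(-i)}\in\X_{-i}$ (which holds because $\overline{x}\in\mathcal{Q}\subseteq\X$ has product structure), this yields $x\in\X$ and hence $x\in\mathcal{C}\cap\X=\mathcal{Q}$. Crucially, the difference $x-\overline{x}$ is supported only on the $i$-th block, equalling $x^{(i)}-\overline{x}^{(i)}$ there and vanishing elsewhere, so applying the variational inequality $\inn{x-\overline{x}}{v(\overline{x})}\leq 0$ collapses the population inner product to a single block and gives $\inn{x^{(i)}-\overline{x}^{(i)}}{v_{i}(\overline{x})}\leq 0$.

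It then remains to convert this gradient condition into the payoff inequality. Under Assumption~\ref{Ass:aiishhfggfjhhdjjdhhdjjddd} the map $u_{i}(\cdot,\overline{x}^{(-i)})$ is concave and differentiable with gradient $v_{i}(\overline{x})=\nabla_{x^{(i)}}u_{i}(\overline{x})$, so the first-order characterization of concavity yields
\begin{equation*}
u_{i}(x^{(i)},\overline{x}^{(-i)})\leq u_{i}(\overline{x})+\inn{x^{(i)}-\overline{x}^{(i)}}{v_{i}(\overline{x})}\leq u_{i}(\overline{x}),
\end{equation*}
the second inequality being exactly the block relation just obtained. Since $i$ and $x^{(i)}\in\mathcal{Q}^{(i)}(\overline{x}^{(-i)})$ were arbitrary, \eqref{Eq:aiaishhdjhdggshhggaa} holds for every player, so $\overline{x}$ is a Nash equilibrium. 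The argument is essentially routine; the only point demanding care — and the one I would flag as the main obstacle — is the feasibility verification $x\in\mathcal{Q}$, since it is precisely here that the coupled (non-product) nature of $\mathcal{C}$ must be reconciled with the per-player admissible set $\mathcal{Q}^{(i)}$, and it is this compatibility that makes the single-coordinate test vectors legitimate in the variational inequality.
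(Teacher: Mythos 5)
Your proof is correct and is exactly the standard first-order-optimality argument the paper invokes (the paper gives no explicit proof, attributing the claim to "the usual first order optimality condition for convex optimization"): testing the variational inequality against single-block deviations $(x^{(i)},\overline{x}^{(-i)})\in\mathcal{Q}$ and then applying the gradient inequality for the concave map $u_{i}(\cdot,\overline{x}^{(-i)})$. Your explicit feasibility check that the composite profile lies in $\mathcal{Q}$ is the right point to be careful about, and it goes through as you describe.
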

In the case where no coupling constraint is present, i.e. $\mathcal{C}=\mathcal{X}$, then the converse of above proposition holds. However, due to the coupling constraint, a Nash equilibrium has not to be a solution of variational inequality. 

Another nice thing about VI is that under mild condition one can establish existence of its solution. For instance it is known that in case $\Z\neq\emptyset$ is compact and convex and $F$ is continuous, then there exists at least a solution of $\SOL(\Z,F)$. Moreover in case that $F=v$ and $\Z=\mathcal{Q}$, the latter and Proposition \ref{Prop:aaoaojssjhhdjdhsss} implies the existence of a Nash equilibrium for $\Gamma$:   
\begin{proposition}
Suppose that the Assumption \ref{Ass:aiishhfggfjhhdjjdhhdjjddd} holds. Then $\Gamma$ has a Nash equilibrium. In case that Assumption \ref{Ass:aiishhfggfjhhdjjdhhdjjddd} holds in the strict manner, then $\Gamma$ has a unique Nash equilibrium.
\end{proposition}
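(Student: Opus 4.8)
The plan is to treat existence and uniqueness separately, routing both through the variational characterization in Proposition~\ref{Prop:aaoaojssjhhdjdhsss}.

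For existence, I would first record that the feasible set $\mathcal{Q}=\mathcal{C}\cap\X$ inherits good structure: $\X=\prod_i\X_i$ is compact and convex by the standing compactness--convexity assumption on each $\X_i$, the set $\mathcal{C}=\{x:Ax\le b\}$ is closed and convex as the intersection of finitely many half-spaces, so $\mathcal{Q}$ is compact and convex, and it is non-empty by hypothesis. Under Assumption~\ref{Ass:aiishhfggfjhhdjjdhhdjjddd} the map $v=(v_1,\dots,v_N)$ is continuous on $\mathcal{Q}$. The existence result quoted just before the statement (the Hartman--Stampacchia theorem, obtained from Brouwer's fixed point theorem applied to $x\mapsto\Pi_{\mathcal{Q}}(x+v(x))$) then guarantees $\SOL(\mathcal{Q},v)\neq\emptyset$. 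Finally, Proposition~\ref{Prop:aaoaojssjhhdjdhsss} turns any such solution into a Nash equilibrium of $\Gamma$, which settles existence.

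For uniqueness I would read ``Assumption~\ref{Ass:aiishhfggfjhhdjjdhhdjjddd} holds in the strict manner'' as strict monotonicity of $v$ on $\mathcal{Q}$ in the sign convention of the Basic Notations. The standard two-point argument then applies: if $\overline{x}_1,\overline{x}_2\in\SOL(\mathcal{Q},v)$, testing the variational inequality for $\overline{x}_1$ against the point $\overline{x}_2$ and vice versa gives $\inn{\overline{x}_2-\overline{x}_1}{v(\overline{x}_1)}\le 0$ and $\inn{\overline{x}_1-\overline{x}_2}{v(\overline{x}_2)}\le 0$; adding them yields $\inn{\overline{x}_1-\overline{x}_2}{v(\overline{x}_1)-v(\overline{x}_2)}\ge 0$, which contradicts strict monotonicity unless $\overline{x}_1=\overline{x}_2$. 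Hence $\SOL(\mathcal{Q},v)$ is a singleton.

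The step I expect to be the genuine obstacle is passing from uniqueness of the variational inequality solution to uniqueness of the Nash equilibrium. The text itself stresses that, in the presence of coupling constraints, the converse of Proposition~\ref{Prop:aaoaojssjhhdjdhsss} fails: a Nash equilibrium need not solve $\VI(\mathcal{Q},v)$, because the per-player feasible set $\mathcal{Q}^{(i)}(x^{(-i)})$ moves with the rivals' actions and one cannot recombine the $N$ individual first-order optimality inequalities into a single test over $\mathcal{Q}$. Consequently the argument above only delivers a unique \emph{variational} (normalized) equilibrium, and I would state the uniqueness conclusion in those terms, or else restrict it to the uncoupled case $\mathcal{C}=\X$ where the converse does hold. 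Claiming uniqueness for \emph{all} Nash equilibria would require hypotheses beyond strict monotonicity of $v$, so pinning down the precise meaning of the strict assumption and the intended equilibrium concept is the crux of the matter.
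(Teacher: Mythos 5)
Your existence argument is exactly the one the paper intends: the sentence immediately preceding the proposition records that $\SOL(\Z,F)\neq\emptyset$ whenever $\Z$ is non-empty, compact and convex and $F$ is continuous, and then sets $\Z=\mathcal{Q}$, $F=v$ and invokes Proposition~\ref{Prop:aaoaojssjhhdjdhsss}. Your verification that $\mathcal{Q}=\mathcal{C}\cap\X$ is compact, convex and non-empty and that $v$ is continuous under Assumption~\ref{Ass:aiishhfggfjhhdjjdhhdjjddd} supplies the details the paper leaves implicit, so this half is correct and takes the same route.

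On uniqueness you have put your finger on a real defect rather than created one. The paper gives no proof of the uniqueness claim, and the only route available through its variational machinery is the one you describe: strict monotonicity of $v$ makes $\SOL(\mathcal{Q},v)$ a singleton by the two-point argument (with the paper's sign convention $\inn{x_{1}-x_{2}}{F(x_{1})-F(x_{2})}\leq 0$ for monotone maps, the two tested inequalities sum to $\inn{\overline{x}_{1}-\overline{x}_{2}}{v(\overline{x}_{1})-v(\overline{x}_{2})}\geq 0$, contradicting strictness unless the points coincide, so your computation is right). But, as you observe and as the paper itself states right after Proposition~\ref{Prop:aaoaojssjhhdjdhsss}, under coupling constraints a Nash equilibrium need not belong to $\SOL(\mathcal{Q},v)$, so uniqueness of the variational solution does not bound the number of Nash equilibria. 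Two further points reinforce your caution: first, ``Assumption~\ref{Ass:aiishhfggfjhhdjjdhhdjjddd} holds in the strict manner'' most naturally means strict concavity of each $u_{i}(\cdot,x^{(-i)})$, which does not imply strict monotonicity of the pseudo-gradient $v$ (that would require something like Rosen's diagonal strict concavity), so even the unique-$\SOL$ step already needs the stronger reading you adopt; second, strict concavity in own variables alone does not yield a unique Nash equilibrium even in the uncoupled case $\mathcal{C}=\X$. The honest conclusion is the one you draw: the argument establishes a unique variational (normalized) equilibrium, and the proposition's uniqueness claim for all Nash equilibria is not proved by anything in the paper.
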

\subsection{Decoupling the Constraints by means of Lagrangian Method}
As we have already seen, the equilibrium of the constrained game $\Gamma$ is related to the solution of the variational inequality $\VI(\mathcal{Q},v)$. In order to analyze $\VI(\mathcal{Q},v)$ it is convenient to extend the previous problem to the problem $\VI(\mathcal{X}\times \real^{M}_{+},\tilde{v})$, where $\tilde{v}:\mathcal{X}\times \real^{M}_{+}$,
\begin{equation*}
\tilde{v}:\mathcal{X}\times \real^{M}_{+},\quad (x,\lambda)\mapsto \left[v(x)-A^{T}\lambda, Ax - b\right]^{T}.
\end{equation*}
The advantage of this method is the decoupling of the constraint set, i.e. we only have to handle with the constraint set $\X\times \real^{M}_{\geq 0}$ with product structure rather than with $\mathcal{Q}$. The following shows that there is no burden in doing this:
\begin{proposition}
\label{Prop:aiaisshshjdhdggddd}
Suppose that Assumption \ref{Ass:aiishhfggfjhhdjjdhhdjjddd} and Assumption \ref{Eq:aioaoosjjshhdjjddss} holds. The following statements are equivalent:
\begin{enumerate}
\item $\overline{x}\in\mathcal{Q}$ is a solution of $\VI(\mathcal{Q},v)$
\item There exists $\overline{\lambda}\in\real^{M}_{\geq 0}$ s.t. $(\overline{x},\overline{\lambda})$ is a solution of $\VI(\mathcal{X}\times \real^{M}_{+},\tilde{v})$. 
\end{enumerate}
\end{proposition}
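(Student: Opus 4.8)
The plan is to recognize the extended problem $\VI(\mathcal{X}\times\real^M_+,\tilde{v})$ as the Karush--Kuhn--Tucker (saddle-point) system attached to $\VI(\mathcal{Q},v)$, and to prove the two implications separately. The common starting point is to expand the defining inequality of statement~(2): since the inner product on the product space splits as a sum, $(\overline{x},\overline{\lambda})\in\SOL(\mathcal{X}\times\real^M_+,\tilde{v})$ is equivalent to
\begin{equation*}
\inn{x-\overline{x}}{v(\overline{x})-A^T\overline{\lambda}}+\inn{\lambda-\overline{\lambda}}{A\overline{x}-b}\leq 0,\qquad\forall\, x\in\mathcal{X},\ \lambda\in\real^M_{\geq 0}.
\end{equation*}
The idea is then to probe this joint inequality by freezing one variable at a time.

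For the implication (2)$\Rightarrow$(1) -- the easy direction -- I would first set $x=\overline{x}$ and let $\lambda$ range over $\real^M_{\geq 0}$. Letting individual coordinates of $\lambda$ grow without bound forces $A\overline{x}-b\leq 0$, so that $\overline{x}\in\mathcal{C}\cap\mathcal{X}=\mathcal{Q}$; choosing $\lambda=0$ together with $\overline{\lambda}\geq 0$ then pins down the complementary-slackness identity $\inn{\overline{\lambda}}{A\overline{x}-b}=0$. Next I would set $\lambda=\overline{\lambda}$, which leaves $\inn{x-\overline{x}}{v(\overline{x})-A^T\overline{\lambda}}\leq 0$ for all $x\in\mathcal{X}$. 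Restricting this to $x\in\mathcal{Q}\subseteq\mathcal{X}$ and using $\inn{x-\overline{x}}{A^T\overline{\lambda}}=\inn{Ax-b}{\overline{\lambda}}-\inn{A\overline{x}-b}{\overline{\lambda}}\leq 0$ (by the slackness identity, feasibility $Ax\leq b$ on $\mathcal{Q}$, and $\overline{\lambda}\geq 0$) yields $\inn{x-\overline{x}}{v(\overline{x})}\leq 0$ on $\mathcal{Q}$, which is exactly statement~(1).

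The converse (1)$\Rightarrow$(2) is where the real work and the only use of Slater's condition (Assumption~\ref{Eq:aioaoosjjshhdjjddss}) lie. I would first observe that, because $v(\overline{x})$ is a fixed vector, $\overline{x}\in\SOL(\mathcal{Q},v)$ says precisely that $\overline{x}$ maximizes the \emph{linear} functional $x\mapsto\inn{x}{v(\overline{x})}$ over the convex set $\mathcal{Q}=\{x\in\mathcal{X}:Ax-b\leq 0\}$. I would then invoke the KKT theorem for this convex program: under Slater's condition there exists a multiplier $\overline{\lambda}\in\real^M_{\geq 0}$ such that $\overline{x}$ maximizes the Lagrangian $x\mapsto\inn{x}{v(\overline{x})}-\inn{\overline{\lambda}}{Ax-b}$ over $\mathcal{X}$ (equivalently, $\inn{x-\overline{x}}{v(\overline{x})-A^T\overline{\lambda}}\leq 0$ for all $x\in\mathcal{X}$) and such that $\inn{\overline{\lambda}}{A\overline{x}-b}=0$. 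Recombining these two facts with the feasibility $A\overline{x}\leq b$ reproduces the joint inequality displayed above, hence statement~(2).

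I expect the main obstacle to be the clean invocation of KKT in this setting: the feasible region carries an abstract convex-set constraint $\mathcal{X}$ (possibly lower-dimensional in its ambient space) alongside the affine inequalities $Ax\leq b$, which is exactly why the Slater point is required to lie in $\text{relint}(\mathcal{X})$ rather than in the topological interior. Care must be taken to apply the constraint-qualified version of the optimality conditions and to translate the resulting stationarity condition into the variational form $\inn{x-\overline{x}}{v(\overline{x})-A^T\overline{\lambda}}\leq 0$ on $\mathcal{X}$; once this translation is secured, the reassembly into $\tilde{v}$ is purely algebraic.
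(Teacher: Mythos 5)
Your argument is correct and follows essentially the same route the paper intends: the paper dispenses with the proof by citing the standard KKT argumentation (Proposition~1.3.4 in \cite{Facchinei1}), and your two directions --- probing the product-space inequality with one block frozen to extract feasibility, complementary slackness, and the restricted VI, and invoking the Slater-qualified KKT theorem for the linearized program $\max_{x\in\mathcal{Q}}\inn{x}{v(\overline{x})}$ in the converse --- are precisely that argument written out. No substantive difference from the paper's (referenced) proof.
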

The proof is standard KKT argumentation and based on e.g. 1.3.4 Proposition in \cite{Facchinei1} (see also Subsubsection 4.3.2.2 in \cite{Scutari2012}).

So in order to solve $\VI(\mathcal{Q},v)$ it is sufficient to seek for the solution of $\VI(\mathcal{X}\times \real^{M}_{+},\tilde{v})$. It follows from Proposition \ref{Prop:aoaoajsjsksjjdddddd}, we need to seek for the latter for variationally stable set for $(\mathcal{X}\times \real^{M}_{+},\tilde{v})$, assuming that it is non-empty. 
\subsection{Variational Inequality and Variational Stability}
\label{Subsec:aooaajajsskkjjddddff}
In case that $G$ is a monotone operator, it holds:
\begin{equation*}
\inn{x-\overline{x}}{G(x)}\leq\inn{x-\overline{x}}{G(\overline{x})}\leq 0,~\forall x\in\mathcal{Z}, \overline{x}\in\text{SOL}(\mathcal{Z},G).
\end{equation*}
This motivates to introduce the following notion:
\begin{definition}
We say that a closed $\VS(\Z,F)\subset\Z$ is a variationally stable set for $(\Z,F)$, if:
\begin{equation}
\label{Eq:aoaosjsjhdhdjsjsjsdd}
\inn{x-\overline{x}}{F(x)}\leq 0\quad\forall x\in \Z,~\overline{x}\in\VS(\Z,F)
\end{equation}
with equality for a given $\overline{x}\in\VS(\Z,F)$ if and only if $x\in \VS(\Z,F)$.
\end{definition}

As we will see later, it is convenient algorithmically to work with the concept of variational stability instead with the concept of variational inequality. However, the following gives that under mild condition, both concepts are the same:
\begin{proposition}
\label{Prop:aoaoajsjsksjjdddddd}
Suppose that $\VS(\Z,F)\neq\emptyset$. Then $\VS(\Z,F)=\SOL(\Z,F)$.
\end{proposition}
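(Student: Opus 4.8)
The plan is to prove the set equality by establishing the two inclusions $\SOL(\Z,F)\subseteq\VS(\Z,F)$ and $\VS(\Z,F)\subseteq\SOL(\Z,F)$ separately, using the nonemptiness hypothesis $\VS(\Z,F)\neq\emptyset$ to fix a reference point. Throughout I would rely on the two structural ingredients that are in force in our intended application: that $\Z$ is convex and that $F$ is continuous. As it turns out, one inclusion is a purely algebraic consequence of the two defining inequalities together with the ``if and only if'' clause, whereas the other is a Minty-type line-segment argument, which is where the genuine work sits.

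For $\SOL(\Z,F)\subseteq\VS(\Z,F)$ I would argue directly. If $\SOL(\Z,F)=\emptyset$ the inclusion is vacuous, so take $\overline{x}^{*}\in\SOL(\Z,F)$ and, invoking $\VS(\Z,F)\neq\emptyset$, fix a reference point $\overline{x}\in\VS(\Z,F)$. Testing the variational-inequality condition $\inn{x-\overline{x}^{*}}{F(\overline{x}^{*})}\leq 0$ at $x=\overline{x}$ gives $\inn{\overline{x}^{*}-\overline{x}}{F(\overline{x}^{*})}\geq 0$, while testing the variational-stability inequality $\inn{x-\overline{x}}{F(x)}\leq 0$ at $x=\overline{x}^{*}$ gives $\inn{\overline{x}^{*}-\overline{x}}{F(\overline{x}^{*})}\leq 0$. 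Hence $\inn{\overline{x}^{*}-\overline{x}}{F(\overline{x}^{*})}=0$, and the equality clause in the definition of variational stability (for the fixed $\overline{x}$, evaluated at $x=\overline{x}^{*}$) forces $\overline{x}^{*}\in\VS(\Z,F)$. This direction uses neither convexity nor continuity.

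For $\VS(\Z,F)\subseteq\SOL(\Z,F)$ I would use the line-segment trick to convert the stability inequality, which only controls $F$ at the running point, into a statement about $F$ at the candidate solution. Fix $\overline{x}\in\VS(\Z,F)$ and an arbitrary $x\in\Z$, and set $x_{\lambda}:=(1-\lambda)\overline{x}+\lambda x$ for $\lambda\in(0,1]$; convexity of $\Z$ yields $x_{\lambda}\in\Z$. Applying the stability inequality at $x_{\lambda}$ and using $x_{\lambda}-\overline{x}=\lambda(x-\overline{x})$ gives, after dividing by $\lambda>0$, that $\inn{x-\overline{x}}{F(x_{\lambda})}\leq 0$. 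Letting $\lambda\downarrow 0$ and invoking continuity of $F$, so that $F(x_{\lambda})\to F(\overline{x})$, produces $\inn{x-\overline{x}}{F(\overline{x})}\leq 0$ for every $x\in\Z$, i.e. $\overline{x}\in\SOL(\Z,F)$.

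The step I expect to be the main obstacle is precisely this last inclusion: variational stability constrains $F$ only at the variable $x$, whereas membership in $\SOL(\Z,F)$ demands control of $F$ at $\overline{x}$, and bridging this gap is exactly what the limit $\lambda\downarrow 0$ accomplishes. It is the only place where convexity of $\Z$ and continuity of $F$ are genuinely used, so I would make sure these hypotheses are available before running the argument; in the application they hold for $\Z=\X\times\real^{M}_{\geq 0}$ and $F=\tilde v$ by Assumption \ref{Ass:aiishhfggfjhhdjjdhhdjjddd}.
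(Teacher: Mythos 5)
Your proof is correct and in fact does more than the paper's. Your first inclusion, $\SOL(\Z,F)\subseteq\VS(\Z,F)$, is the same argument as the paper's entire proof, only phrased directly instead of by contradiction: the paper fixes $x_{*}\in\VS(\Z,F)$, assumes $\overline{x}\in\SOL(\Z,F)\setminus\VS(\Z,F)$, and derives the incompatible pair $\inn{x_{*}-\overline{x}}{F(\overline{x})}\leq 0$ and $\inn{x_{*}-\overline{x}}{F(\overline{x})}>0$ --- exactly your two test inequalities. The paper stops there, so it never establishes the reverse inclusion $\VS(\Z,F)\subseteq\SOL(\Z,F)$, which is needed for the claimed set equality; your Minty-type line-segment argument supplies it. You are also right to flag that this direction genuinely requires convexity of $\Z$ and continuity of $F$, neither of which appears among the proposition's stated hypotheses: on the nonconvex set $\Z=\{0,1\}\subset\real$ with $F(0)=1$ and $F(1)=-1$, the set $\{0\}$ is variationally stable but $0\notin\SOL(\Z,F)$, so the statement as printed fails without those assumptions. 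In the intended application ($\Z=\X\times\real^{M}_{\geq 0}$ convex, $F=\tilde v$ continuous by Assumption 2) they do hold, so your argument closes a real gap in the paper's proof rather than merely restyling it.
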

\begin{proof}
Let $\overline{x}\in\SOL(\Z,F)$ but $\overline{x}\notin \VS(\Z,F)$. We have since $\overline{x}\in\SOL(\Z,F)$, $\inn{x-\overline{x}}{F(\overline{x})}\leq 0$, $\forall x\in \Z$. So in particular for an $x_{*}\in\tilde{\mathcal{Z}}$:
\begin{equation}
\label{Eq:aioaaosjsjdhhdfggffgdhhdd}
\inn{x_{*}-\overline{x}}{F(\overline{x})}\leq 0.
\end{equation}
Moreover, since $\overline{x}\notin\VS(\Z,F)$ and $x_{*}\in\VS(\Z,F)$, we have $\inn{\overline{x}-x_{*}}{F(\overline{x})}<0$ and thus $\inn{x^{*}-x}{F(\overline{x})}>0$. This contradicts to \eqref{Eq:aioaaosjsjdhhdfggffgdhhdd}, so that we can imply the desired statement. 
\end{proof}
\begin{remark}
The assumption that $\VS(\mathcal{Z},F)\neq\emptyset$ is closed appears at the first sight forced. However, if $\mathcal{Z}$ is a nonempty compact convex set, $F=\nabla g$ where $g:\mathcal{Z}\rightarrow \real$ is a concave function then the assumption is true. Indeed Since $g$ is concave, $\nabla g$ is monotone. Therefore:
\begin{equation*}
\inn{x-\overline{x}}{\nabla g(x)-\nabla g(\overline{x})}\leq 0.
\end{equation*}  
By the first-order optimality condition we have for $x^{*}\in \argmax g$ and $x\in\mathcal{Z}$, $\inn{\nabla g(\overline{x})}{x-\overline{x}}\leq 0$ and thus by monotonicity:
\begin{equation*}
\inn{x-\overline{x}}{\nabla g(x)}\leq \inn{x-\overline{x}}{\nabla g(\overline{x})}\leq 0.
\end{equation*}
Moreover concavity asserts that $\inn{\nabla g(x)}{x-x^{*}}<0$ whenever $x$ is not a maximizer of $g$. Thus we have that $\argmax g=\VS(\mathcal{Z},\nabla g)$ and the fact that $\argmax g\neq \emptyset$ implies that $\VS(\mathcal{Z},\nabla g)\neq \emptyset$. Moreover it is easy to see that $\argmax g$ is closed.
\end{remark}\section{Bound for Primal-Dual Iterate via Fenchel Coupling}
We begin by measuring the distance between the evolution of each agents and a strategy profile by means of the "total" Fenchel coupling $F^{N}$, which is crucial to provide convergence theorem for MAAL. By using Proposition \ref{Prop:aaisshhfjffjfjfff}, inserting the iterate of the algorithm, using triangle inequality, we have for all $x\in\X$:
\begin{align*}
F^{N}(x,Y_{t+1})-F^{N}(x,Y_{t})\leq&\gamma_{t}\inni{X_{t}-x}{v(X_{t})-A^{T}\lambda_{t}}\\
&+\tfrac{\gamma_{t}^{2}}{2K}(C_{1}^{2}+C_{2}^{2}\norm{\lambda_{t}}^{2}_{2})
\end{align*}

where $C_{1},C_{2}>0$ are constants fulfilling:
\begin{equation}
\label{Eq:aoaoshshjhsjjss}
\normi{v(x)}_{*}\leq C_{1},~\normi{A^{T}\lambda}_{*}\leq C_{2}\norm{\lambda}_{2},~\forall x\in\mathcal{X},\lambda\in\real^{M}_{\geq 0}.
\end{equation}
By summing over all $t=0,\ldots,T$ and subsequent telescoping, we obtain a bound for $$\mathcal{E}_{T}^{(1)}(x):=F^{N}(x,Y_{T})-F^{N}(x,Y_{0}).$$
That is:
\begin{align}
\mathcal{E}_{T}^{(1)}(x)\leq&\sum_{t=0}^{T}\gamma_{t}\left[\inni{X_{t}-x}{v(X_{t})-A^{T}\lambda_{t}}\right]+\tfrac{C_{1}^2}{2K}\sum_{t=0}^{T}\gamma_{t}^{2}\nonumber\\
&+\tfrac{C_{2}^{2}}{2K}\sum_{t=0}^{T}\gamma_{t}^{2}\norm{\lambda_{t}}^{2}_{2},\label{Eq:aiashhshjdhfggfsaas}
\end{align}

In order to eliminate the term \eqref{Eq:aiashhshjdhfggfsaas} involving the dual iterate $\lambda_{t}$, we now estimate of the distance between the dual iterate and any dual point. We can bound:
 $$\mathcal{E}_{T}^{(2)}(\lambda):=(\norm{\lambda-\lambda_{T}}^2_{2}-\norm{\lambda-\lambda_{0}}_{2}^{2})/2$$ for any $\lambda\in\real^{M}_{\geq 0}$ by:
\begin{align}
\mathcal{E}_{T}^{(2)}(\lambda)\leq &\sum_{t=0}^{T}\gamma_{t}\inn{\lambda_{t}-\lambda}{A X_{t}-b}-\sum_{t=0}^{T}\tfrac{\gamma_{t}\theta_{t}}{2}(\norm{\lambda_{t}}^{2}_{2}-\norm{\lambda}^{2})\nonumber\\
&+\sum_{t=0}^{T}\gamma_{t}^{2}(C_{3}^{2}+\theta_{t}^{2}\norm{\lambda_{t}}^2_{2}),\label{Eq:aiaisoodhhfhffgfgfhhfff}
\end{align}
where $C_{3}>0$ is a constant fulfilling:
\begin{equation}
\label{Eq:aoaosjsjhfjfggdhdhgddf}
\norm{A x}_{2}\leq C_{3}.
\end{equation}
\eqref{Eq:aiaisoodhhfhffgfgfhhfff} can be proven in the similar manner as the proof of \eqref{Eq:aiashhshjdhfggfsaas}. 

By combining \eqref{Eq:aiashhshjdhfggfsaas} and \eqref{Eq:aiaisoodhhfhffgfgfhhfff} we obtain immediately the following estimate for the evolution of: $$\tilde{F}((x,\lambda),(Y_{T},\lambda_{T})):=F^{N}(x,Y_{T})+(\norm{\lambda_{t}-\lambda}^{2}_{2}/2),$$ 
which is:
\begin{theorem}
\label{Thm:aoaaojsjksjddjjdjddd}
Let $C_{1},C_{2},C_{3}>0$ be constants fulfilling \eqref{Eq:aoaoshshjhsjjss} and \eqref{Eq:aoaosjsjhfjfggdhdhgddf}. It holds for:
\begin{equation*}
\mathcal{E}_{T}(x,\lambda):=\tilde{F}((x,\lambda),(Y_{T},\lambda_{T}))-\tilde{F}((x,\lambda),(Y_{0},\lambda_{0}))
\end{equation*}
and for all $(x,\lambda)\in\mathcal{X}\times \real^{M}_{\geq 0}$:
\begin{align}
\mathcal{E}_{T}(x,\lambda)&\leq\sum_{t=0}^{T}\gamma_{t}\inni{(X_{t},\lambda_{t})-(x,\lambda)}{\tilde{v}(X_{t},\lambda_{t})}_{\sim}\nonumber\\
&+\tilde{C}_{1}\sum_{t=0}^{T}\gamma_{t}^{2}+\sum_{t=0}^{T}\tfrac{\gamma_{t}\theta_{t}\norm{\lambda}^{2}_{2}}{2}\nonumber\\&+\sum_{t=0}^{T}\gamma_{t}\norm{\lambda_{t}}^{2}_{2}\left[\gamma_{t}\left(2\theta_{t}^{2}+\tilde{C}_{2}\right)-\tfrac{\theta_{t}}{2}\right]\nonumber
\end{align}
where:
\begin{equation*}
\tilde{C}_{1}:=\tfrac{C_{1}^2}{2K}+2 C_{3}^{2}\quad\tilde{C}_{2}:=\tfrac{C_{2}^{2}}{2K},
\end{equation*}
and for all $x,\tilde{x}\in \prod_{i}\real^{D_{i}}$ and $\lambda,\tilde{\lambda}\in\real^{M}$:
\begin{equation*}
\inni{(x,\lambda)}{(\tilde{x},\tilde{\lambda})}_{\sim}=\inni{x}{\tilde{x}}+\inn{\lambda}{\tilde{\lambda}}.
\end{equation*}
\end{theorem}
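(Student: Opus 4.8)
The plan is to recognize that the asserted estimate is nothing but the termwise sum of the two bounds \eqref{Eq:aiashhshjdhfggfsaas} and \eqref{Eq:aiaisoodhhfhffgfgfhhfff}, which have already been established for the primal and dual iterates separately. First I would unfold the definition of $\tilde{F}$ to observe the clean split
$$\mathcal{E}_{T}(x,\lambda)=\mathcal{E}_{T}^{(1)}(x)+\mathcal{E}_{T}^{(2)}(\lambda),$$
since $\tilde{F}((x,\lambda),(Y_{T},\lambda_{T}))=F^{N}(x,Y_{T})+\norm{\lambda_{T}-\lambda}^{2}_{2}/2$ separates into the Fenchel-coupling part governing the primal dual variable $Y_{t}$ and the squared-Euclidean part governing the price vector $\lambda_{t}$, and $\norm{\lambda-\lambda_{T}}^{2}_{2}=\norm{\lambda_{T}-\lambda}^{2}_{2}$.

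Adding \eqref{Eq:aiashhshjdhfggfsaas} to \eqref{Eq:aiaisoodhhfhffgfgfhhfff} then yields the claim after three bookkeeping steps. First, I would merge the two linear terms $\inni{X_{t}-x}{v(X_{t})-A^{T}\lambda_{t}}$ and $\inn{\lambda_{t}-\lambda}{AX_{t}-b}$ into one: by the definition of $\tilde{v}(x,\lambda)=[v(x)-A^{T}\lambda,\,Ax-b]^{T}$ and of the product pairing $\inni{\cdot}{\cdot}_{\sim}$, their sum is exactly $\inni{(X_{t},\lambda_{t})-(x,\lambda)}{\tilde{v}(X_{t},\lambda_{t})}_{\sim}$, which after multiplying by $\gamma_{t}$ and summing gives the leading term. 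Second, the two $\gamma_{t}$-quadratic constant contributions — $\tfrac{C_{1}^{2}}{2K}\gamma_{t}^{2}$ from the primal bound and the $C_{3}^{2}$-term from the dual bound — collapse into $\tilde{C}_{1}\gamma_{t}^{2}$ with $\tilde{C}_{1}=\tfrac{C_{1}^{2}}{2K}+2C_{3}^{2}$. Third, I would collect every term carrying $\norm{\lambda_{t}}^{2}_{2}$, namely $\tfrac{C_{2}^{2}}{2K}\gamma_{t}^{2}=\tilde{C}_{2}\gamma_{t}^{2}$ from the primal bound together with $-\tfrac{\gamma_{t}\theta_{t}}{2}$ and the $\theta_{t}^{2}\gamma_{t}^{2}$-term from the dual bound, and factor out $\gamma_{t}\norm{\lambda_{t}}^{2}_{2}$ to obtain $\gamma_{t}\norm{\lambda_{t}}^{2}_{2}[\gamma_{t}(2\theta_{t}^{2}+\tilde{C}_{2})-\tfrac{\theta_{t}}{2}]$, while the $+\tfrac{\gamma_{t}\theta_{t}}{2}\norm{\lambda}^{2}_{2}$ term is carried over unchanged.

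Since \eqref{Eq:aiashhshjdhfggfsaas} and \eqref{Eq:aiaisoodhhfhffgfgfhhfff} are taken as already proven, the argument is purely algebraic and presents no genuine obstacle; the only point demanding care is the accounting of the factors of $2$ in $\tilde{C}_{1}$ and in the coefficient $2\theta_{t}^{2}$. These arise because the dual estimate \eqref{Eq:aiaisoodhhfhffgfgfhhfff} is obtained by expanding the projected gradient step with the inequality $\norm{a+b}^{2}\leq 2\norm{a}^{2}+2\norm{b}^{2}$ applied to $a=AX_{t}-b$ and $b=-\theta_{t}\lambda_{t}$, so that both the constant and the $\theta_{t}^{2}\norm{\lambda_{t}}^{2}_{2}$ contributions enter with prefactor $2$. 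Were I instead to derive the two constituent bounds from scratch, the substantive work would lie precisely there: using Proposition \ref{Prop:aaisshhfjffjfjfff}(2) to convert one mirror-update step into the primal descent inequality, and the nonexpansiveness of $\Pi_{\real^{M}_{\geq 0}}$ together with the uniform bounds \eqref{Eq:aoaoshshjhsjjss} and \eqref{Eq:aoaosjsjhfjfggdhdhgddf} for the dual one. For the present statement, however, these are assumed, and only the combination and regrouping above remain.
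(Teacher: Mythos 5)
Your proposal is correct and coincides with the paper's own derivation, which obtains the theorem precisely by adding the primal bound \eqref{Eq:aiashhshjdhfggfsaas} to the dual bound \eqref{Eq:aiaisoodhhfhffgfgfhhfff} and regrouping the linear term, the constant-quadratic term, and the $\norm{\lambda_{t}}^{2}_{2}$ terms exactly as you describe. Your remark on the factors of $2$ is also apt: as printed, \eqref{Eq:aiaisoodhhfhffgfgfhhfff} carries $\gamma_{t}^{2}(C_{3}^{2}+\theta_{t}^{2}\norm{\lambda_{t}}^{2}_{2})$ without them, so the stated $\tilde{C}_{1}$ and the coefficient $2\theta_{t}^{2}$ follow either a fortiori or, as you note, from the $\norm{a+b}^{2}\leq 2\norm{a}^{2}+2\norm{b}^{2}$ step in the dual derivation.
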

\section{Convergence Analysis}
In this section we investigate the convergence of MAAL to the variational stable set $\VS(\mathcal{Q},v)$. As already discuss in Section \ref{Sec:aoaosjjskdjdjdkkdd}, this leads, in the case that $\VS(\mathcal{Q},v)\neq \emptyset$, to the convergence of MAAL to the solution $\SOL(\mathcal{Q},v)$ of variational inequality $\VI(\mathcal{Q},v)$ and to the convergence of MAAL to the corresponding subset of the Nash equilibrium of $\Gamma$. By $1.$ in proposition \ref{Prop:aaisshhfjffjfjfff} it follows that convergence with respect to $F^{N}$ implies the convergence of the iterate w.r.t. the underlying norm $\normi{\cdot}$. Therefore the bound for $F^{N}(x,Y_{T})$ provided in previous section helps us to establish the desired statement. For technical reason, it is advantageous to have the converse property:
\begin{assum}
\label{Ass:aoasjsdhhggehegeeeerrrttt}
For any $p\in\X$ and any sequence $(Y_{n})_{n}$ in $\V^{*}$, it holds: $\Phi(Y_{n})\rightarrow p~\Rightarrow~ F^{N}(p,Y_{n})\rightarrow 0$
\end{assum}
Define for $\mathcal{C}\subset \X$ and $\tilde{\mathcal{C}}\subset \X\times\real^{M}_{\geq 0}$:
\begin{align*}
F^{N}(\mathcal{C},y)&:=\inf\lrbrace{F^{N}(x,y):~x\in\mathcal{C}}\\
\tilde{F}(\tilde{\mathcal{C}},z)&:=\inf\lrbrace{\tilde{F}(x,y):~x\in\mathcal{C}}
\end{align*}
Notice that the property given \eqref{Ass:aoasjsdhhggehegeeeerrrttt} holds also in case that $p$ is substituted more generally by a closed set:

\begin{proposition}
\label{Prop:aoaojsjsksjsjsjkaakaaks}
Suppose that Assumption \ref{Ass:aoasjsdhhggehegeeeerrrttt} holds. Let $\mathcal{C}$ be a closed subset of $\X$ and $\tilde{\mathcal{C}}$ be a closed subset of $\X\times \real^{M}_{\geq 0}$. Then $\Phi(Y_{t})\rightarrow \mathcal{C}$ if and only if $F^{N}(\mathcal{C},y)\rightarrow 0$ and $(\Phi(Y_{t}),\tilde{\lambda}_{t})\rightarrow \tilde{\mathcal{C}}$ if and only if $F^{N}(\tilde{\mathcal{C}},(\Phi(Y_{t}),\tilde{\lambda}_{t}))\rightarrow 0$ 
\end{proposition}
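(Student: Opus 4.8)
The plan is to establish both equivalences by the same two-step scheme: the ``if'' directions follow at once from the lower bound in Proposition~\ref{Prop:aaisshhfjffjfjfff}, whereas the ``only if'' directions rest on Assumption~\ref{Ass:aoasjsdhhggehegeeeerrrttt} combined with a compactness argument. First I would record the set-version of the lower bound. Summing item~1 of Proposition~\ref{Prop:aaisshhfjffjfjfff} over the players and using $\big(\sum_{i}a_{i}\big)^{2}\leq N\sum_{i}a_{i}^{2}$ to pass from $\normi{\cdot}$ to the sum of squared per-player norms, I obtain a constant $\kappa>0$ with $F^{N}(x,y)\geq (\kappa/2)\normi{\Phi(y)-x}^{2}$ for all $x\in\X$. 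Taking the infimum over $x\in\mathcal{C}$ gives
\begin{equation*}
F^{N}(\mathcal{C},Y_{t})\geq \tfrac{\kappa}{2}\,\mathrm{dist}\big(\Phi(Y_{t}),\mathcal{C}\big)^{2},
\end{equation*}
so $F^{N}(\mathcal{C},Y_{t})\to 0$ forces $\mathrm{dist}(\Phi(Y_{t}),\mathcal{C})\to0$. For the tilde coupling the same computation applies, the extra Euclidean term $\tfrac12\norm{\cdot}^{2}_{2}$ contributing exactly the squared dual distance, so that $\tilde{F}(\tilde{\mathcal{C}},(Y_{t},\tilde{\lambda}_{t}))\geq \tilde{\kappa}\,\mathrm{dist}\big((\Phi(Y_{t}),\tilde{\lambda}_{t}),\tilde{\mathcal{C}}\big)^{2}$ and the easy direction follows identically.

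For the converse I would argue by contradiction. Suppose $\Phi(Y_{t})\to\mathcal{C}$ but $F^{N}(\mathcal{C},Y_{t})\not\to0$; then there is a subsequence with $F^{N}(\mathcal{C},Y_{t_{k}})\geq\varepsilon>0$. Since $\X$ is compact, the closed set $\mathcal{C}$ is compact, so the nearest points $p_{t_{k}}\in\mathcal{C}$ to $\Phi(Y_{t_{k}})$, which satisfy $\normi{\Phi(Y_{t_{k}})-p_{t_{k}}}\to0$, admit a further subsequence $p_{t_{k}}\to p^{*}\in\mathcal{C}$, whence $\Phi(Y_{t_{k}})\to p^{*}$. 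Assumption~\ref{Ass:aoasjsdhhggehegeeeerrrttt} then yields $F^{N}(p^{*},Y_{t_{k}})\to0$, and since $p^{*}\in\mathcal{C}$ the monotonicity of the infimum gives $F^{N}(\mathcal{C},Y_{t_{k}})\leq F^{N}(p^{*},Y_{t_{k}})\to0$, contradicting $F^{N}(\mathcal{C},Y_{t_{k}})\geq\varepsilon$. The key point is that Assumption~\ref{Ass:aoasjsdhhggehegeeeerrrttt} is available only for a \emph{fixed} limit $p$, so one cannot feed it the moving nearest points $p_{t_{k}}$ directly; compactness supplies a single limit $p^{*}$, and comparing against $F^{N}(p^{*},\cdot)$ rather than $F^{N}(p_{t_{k}},\cdot)$ closes the gap.

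The tilde case runs along the same lines, invoking Assumption~\ref{Ass:aoasjsdhhggehegeeeerrrttt} on the primal block $F^{N}$ and handling the dual block through continuity of the Euclidean coupling $\tfrac12\norm{\cdot}^{2}_{2}$, whose convergence to zero is automatic once the dual components converge. The main obstacle I anticipate is that $\tilde{\mathcal{C}}\subset\X\times\real^{M}_{\geq0}$ need not be compact, so extracting a convergent subsequence of the nearest points $(p_{t_{k}},\mu_{t_{k}})\in\tilde{\mathcal{C}}$ requires the dual components $\mu_{t_{k}}$ to stay bounded. This I would secure from $\norm{\tilde{\lambda}_{t_{k}}-\mu_{t_{k}}}_{2}\to0$ together with boundedness of the dual iterates $\tilde{\lambda}_{t}$, which is guaranteed by the damping term $-\theta_{t}\lambda_{t}$ in the central update confining them to a bounded set; with this boundedness the same subsequence-plus-Assumption argument yields a limit $(p^{*},\mu^{*})\in\tilde{\mathcal{C}}$ and the contradiction exactly as before.
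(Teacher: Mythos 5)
The paper itself states this proposition without proof, so there is nothing to compare against line by line; judging your argument on its own terms, the first equivalence (the one for $\mathcal{C}\subset\X$) is handled correctly and by what is surely the intended route: the lower bound $F^{N}(x,y)\geq \tfrac{\kappa}{2}\normi{\Phi(y)-x}^{2}$ obtained by summing item~1 of Proposition~\ref{Prop:aaisshhfjffjfjfff} gives the ``if'' direction, and for the ``only if'' direction your key observation --- that Assumption~\ref{Ass:aoasjsdhhggehegeeeerrrttt} applies only to a \emph{fixed} limit point, so one must extract a convergent subsequence $p_{t_{k}}\to p^{*}\in\mathcal{C}$ of the nearest points (possible since $\mathcal{C}$ is a closed subset of the compact set $\X$) and then compare against $F^{N}(p^{*},\cdot)$ rather than $F^{N}(p_{t_{k}},\cdot)$ --- is exactly right. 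Note that one genuinely cannot pass from $F^{N}(p^{*},Y_{t_{k}})\to 0$ to $F^{N}(p_{t_{k}},Y_{t_{k}})\to 0$, since the difference contains the term $\inn{p_{t_{k}}-p^{*}}{Y_{t_{k}}}$ and $Y_{t_{k}}$ may be unbounded; your workaround via the monotonicity of the infimum avoids this correctly.

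The gap is in your treatment of the second equivalence. You correctly identify that the argument needs the dual components $\mu_{t_{k}}$ of the nearest points in $\tilde{\mathcal{C}}$ to be bounded, but the fix you propose --- boundedness of the dual iterates forced by the damping term $-\theta_{t}\lambda_{t}$ --- does not work here. First, the proposition is a statement about \emph{arbitrary} sequences $(\tilde{\lambda}_{t})$ in $\real^{M}_{\geq 0}$, and indeed in the proof of Theorem~\ref{Lem:aoaosjsjshdhdjshssjjddd} it is invoked for an auxiliary constructed sequence $(\tilde{Y}_{t},\tilde{\lambda}_{t})$, not for the iterates produced by MAAL; so properties of the algorithm cannot be imported. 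Second, even for the algorithm's iterates, boundedness of $(\lambda_{t})_{t}$ is nowhere established in the paper and would itself require a separate argument. The honest repair is to add a hypothesis that $\tilde{\mathcal{C}}$ is bounded (hence compact, since it is closed): then $\mu_{t_{k}}$ automatically admits a convergent subsequence with limit $\mu^{*}$, the pair $(p^{*},\mu^{*})$ lies in $\tilde{\mathcal{C}}$ by closedness, and your comparison $\tilde{F}(\tilde{\mathcal{C}},\cdot)\leq F^{N}(p^{*},\cdot)+\tfrac{1}{2}\norm{\mu^{*}-\tilde{\lambda}_{t_{k}}}_{2}^{2}$ closes the argument since the second term is controlled by $\norm{\mu^{*}-\mu_{t_{k}}}_{2}+\norm{\mu_{t_{k}}-\tilde{\lambda}_{t_{k}}}_{2}\to 0$. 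This restriction is harmless for the application, where $\tilde{\mathcal{C}}=\VS(\X\times\real^{M}_{\geq 0},\tilde{v})$ and the multiplier set is bounded under Slater's condition; without some such boundedness the proposition as literally stated is not established by your argument.
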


In the following, we state the following convergence statement for the iterate $(X_{t},\lambda_{t})$ of MAAL:
\begin{theorem}
\label{Lem:aoaosjsjshdhdjshssjjddd}
Let $\tilde{C}_{2}>0$ be a constant as given in Theorem \ref{Thm:aoaaojsjksjddjjdjddd}. Suppose that Assumption \ref{Ass:aoasjsdhhggehegeeeerrrttt} holds. Suppose that $(\gamma_{t})_{t}$ satisfies:
\begin{equation}
\label{Eq:aiiahhssjjdhhdjjdjdsss}
\sum_{t=0}^{\infty}\gamma_{t}=\infty,\quad \tfrac{\sum_{t=1}^{T}\gamma^{2}_{t}}{\sum_{t=1}^{T}\gamma_{t}}\rightarrow 0,~T\rightarrow\infty.
\end{equation}
For an augmentation sequence $(\theta_{t})_{t}$ satisfying:
\begin{equation}
\label{Eq:aoaoshshsjjhdggdhhss}
\tfrac{\sum_{t=1}^{T}\gamma_{t}\theta_{t}}{\sum_{t=1}^{T}\gamma_{t}}\rightarrow 0,\quad T\rightarrow\infty,
\end{equation} 
and:
\begin{equation}
\label{Eq:aooajsjsjskkjdjjjddd}
\gamma_{t}\left(2\theta_{t}^{2}+\tilde{C}_{2}\right)-\tfrac{\theta_{t}}{2}\leq 0,~\text{for large }t\geq 0.
\end{equation}
It holds for the iterates of MAAL:
\begin{enumerate}
\item There exists a subsequence $(X_{t_{k}},\lambda_{t_{k}})_{k}$ of $(X_{t},\lambda_{t})_{t}$ s.t. $(X_{t_{k}},\lambda_{t_{k}})\rightarrow\VS(\X\times\real^{M}_{\geq 0},\tilde{v})$ as $k\rightarrow\infty$.
\item $(X_{t},\lambda_{t})\rightarrow \VS(\X\times\real^{M}_{\geq 0},\tilde{v})$ as $t\rightarrow\infty$,
\end{enumerate}
\end{theorem}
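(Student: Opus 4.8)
The plan is to read the master estimate of Theorem~\ref{Thm:aoaaojsjksjddjjdjddd} as a Lyapunov inequality for the joint Fenchel coupling $\tilde{F}$ evaluated at a fixed solution $\overline{z}=(\overline{x},\overline{\lambda})\in\VS(\X\times\real^{M}_{\geq 0},\tilde{v})$, and to split the argument into a recurrence part (claim~1) and a last-iterate part (claim~2). Writing $V_{t}:=\inni{(X_{t},\lambda_{t})-\overline{z}}{\tilde{v}(X_{t},\lambda_{t})}_{\sim}$, the defining inequality~\eqref{Eq:aoaosjsjhdhdjsjsjsdd} of variational stability gives $V_{t}\leq 0$ for every $t$, so the leading sum in the estimate is nonpositive; condition~\eqref{Eq:aooajsjsjskkjdjjjddd} makes the last bracketed sum nonpositive for all large $t$; and $\tilde{F}\geq 0$ by part~1 of Proposition~\ref{Prop:aaisshhfjffjfjfff} together with the nonnegativity of $\tfrac{1}{2}\norm{\cdot-\overline{\lambda}}_{2}^{2}$. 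After absorbing the finitely many early terms into a constant, this rearranges into a bound on $\sum_{t=0}^{T}\gamma_{t}(-V_{t})$ by $\mathrm{const}+\tilde{C}_{1}\sum_{t}\gamma_{t}^{2}+\tfrac{1}{2}\norm{\overline{\lambda}}_{2}^{2}\sum_{t}\gamma_{t}\theta_{t}$, with the nonnegative dissipation term $\sum_{t}\gamma_{t}(\tfrac{\theta_{t}}{2}-\gamma_{t}(2\theta_{t}^{2}+\tilde{C}_{2}))\norm{\lambda_{t}}_{2}^{2}$ also retained on the left.

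First I would prove claim~1. Dividing the bound by $S_{T}:=\sum_{t=0}^{T}\gamma_{t}$ and using $\sum_{t}\gamma_{t}=\infty$ together with the ratio conditions~\eqref{Eq:aiiahhssjjdhhdjjdjdsss} and~\eqref{Eq:aoaoshshsjjhdggdhhss}, the right-hand side tends to $0$; hence the $\gamma_{t}$-weighted average of the nonnegative numbers $-V_{t}$ vanishes, which forces $\liminf_{t}(-V_{t})=0$. Choosing a subsequence along which $-V_{t_{k}}\to 0$, the primal block $X_{t_{k}}$ lies in the compact set $\X$, and I would deduce boundedness of the dual block $\lambda_{t_{k}}$ from the damping term $-\theta_{t}\lambda_{t}$ in the central update (equivalently, from the retained dissipation term); passing to a convergent sub-subsequence $(X_{t_{k}},\lambda_{t_{k}})\to(x^{*},\lambda^{*})$ and using continuity of $\tilde{v}$ (Assumption~\ref{Ass:aiishhfggfjhhdjjdhhdjjddd}) gives $\inni{(x^{*},\lambda^{*})-\overline{z}}{\tilde{v}(x^{*},\lambda^{*})}_{\sim}=0$. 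The equality clause in the definition of $\VS$ then forces $(x^{*},\lambda^{*})\in\VS(\X\times\real^{M}_{\geq 0},\tilde{v})$, which is claim~1.

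For claim~2 I would pass from the coupling to distances via Assumption~\ref{Ass:aoasjsdhhggehegeeeerrrttt} and Proposition~\ref{Prop:aoaojsjsksjsjsjkaakaaks}: convergence $(X_{t},\lambda_{t})\to\VS$ is equivalent to $\tilde{F}(\VS,(Y_{t},\lambda_{t}))\to 0$, and claim~1 with the same proposition already yields $\liminf_{t}\tilde{F}(\VS,(Y_{t},\lambda_{t}))=0$, so only oscillation has to be excluded. The one-step form of the estimate reads $\tilde{F}_{t+1}(\overline{z})\leq\tilde{F}_{t}(\overline{z})+\gamma_{t}V_{t}+\delta_{t}$ with $\delta_{t}:=\gamma_{t}^{2}\tilde{C}_{1}+\tfrac{1}{2}\gamma_{t}\theta_{t}\norm{\overline{\lambda}}_{2}^{2}\geq 0$ and $V_{t}\leq 0$, so the energy is asymptotically non-increasing up to the perturbation $\delta_{t}$.

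The hard part will be exactly this upgrade. Because the hypotheses are the ergodic ratio conditions~\eqref{Eq:aiiahhssjjdhhdjjdjdsss}--\eqref{Eq:aoaoshshsjjhdggdhhss} rather than summability, $\sum_{t}\delta_{t}$ may diverge, so $\tilde{F}_{t}(\overline{z})$ need not be quasi-Fej\'er monotone and its convergence cannot be read off by telescoping. I would resolve this with an asymptotic-pseudotrajectory argument: the ratio conditions say precisely that the accumulated perturbation is $o(S_{T})$, so the trajectory interpolated in the rescaled time $S_{t}$ is an asymptotic pseudotrajectory of the mean dynamics induced by $\tilde{v}$ on $\X\times\real^{M}_{\geq 0}$; variational stability makes $\tilde{F}(\overline{z},\cdot)$ a Lyapunov function for that flow that strictly decreases outside $\VS$, and the limit-set theorem then promotes the recurrence of claim~1 to $(X_{t},\lambda_{t})\to\VS$. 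The two auxiliary points I expect to cost the most care are the uniform control of the dual iterates $\lambda_{t}$ (needed to justify compactness and interpolation) and checking that the mirror geometry does not spoil the Lyapunov decrease; both should follow from the retained dissipation term together with Proposition~\ref{Prop:aaisshhfjffjfjfff}.
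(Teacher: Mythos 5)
Your treatment of claim~1 is essentially the paper's argument in a slightly different packaging: the paper runs it as a contradiction (if the iterates avoided a neighborhood of $\VS(\X\times\real^{M}_{\geq 0},\tilde{v})$ forever, the strict negativity of the drift $\xi_{t}\leq -c$ would force $\mathcal{E}_{T}\rightarrow-\infty$, contradicting finiteness of $\tilde{F}$), whereas you extract $\liminf_{t}(-V_{t})=0$ from the vanishing weighted average and then pass to a convergent sub-subsequence. Both routes rest on the same master inequality and the same ratio conditions, and both silently need the dual iterates $\lambda_{t}$ to stay bounded so that the relevant region is compact (the paper glosses over this exactly as you do), so claim~1 is fine.

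The genuine gap is in claim~2. You correctly diagnose the difficulty --- the perturbation $\delta_{t}=\gamma_{t}^{2}\tilde{C}_{1}+\tfrac{1}{2}\gamma_{t}\theta_{t}\norm{\overline{\lambda}}_{2}^{2}$ need not be summable under the ergodic conditions \eqref{Eq:aiiahhssjjdhhdjjdjdsss}--\eqref{Eq:aoaoshshsjjhdggdhhss}, so quasi-Fej\'er monotonicity is unavailable --- but you then only \emph{name} a resolution (asymptotic pseudotrajectories and the limit-set theorem) without carrying it out, and that is precisely where the work lies: one would have to construct the interpolated process, verify the APT property (which already requires $\gamma_{t}\rightarrow 0$ and window-wise control of the perturbations, neither of which follows from the stated ratio conditions alone), and establish that the Fenchel coupling is a Lyapunov function for the induced flow. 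The paper avoids all of this with an elementary two-level trapping argument that is absent from your proposal: define the sublevel set $U_{\epsilon}$ of $(y,\lambda)\mapsto\tilde{F}(\VS(\X\times\real^{M}_{\geq 0},\tilde{v}),(y,\lambda))$ and show that for all large $t$ the implication $(X_{t},\lambda_{t})\in U_{\epsilon}\Rightarrow(X_{t+1},\lambda_{t+1})\in U_{\epsilon}$ holds, by splitting into two cases. On the annulus $U_{\epsilon}\setminus U_{\epsilon/2}$, which (via Assumption \ref{Ass:aoasjsdhhggehegeeeerrrttt} and Proposition \ref{Prop:aoaojsjsksjsjsjkaakaaks}) is bounded away from $\VS$, the drift satisfies $\xi_{t}\leq -c<0$ and dominates $\gamma_{t}\psi_{t}+\gamma_{t}^{2}\tilde{C}_{1}$ once $t$ is large, so the coupling decreases; on the core $U_{\epsilon/2}$, the one-step increase is at most $\gamma_{t}\psi_{t}+\gamma_{t}^{2}\tilde{C}_{1}<\epsilon/2$, so the iterate cannot jump out of $U_{\epsilon}$. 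Combining this forward-invariance with the recurrence of claim~1 gives convergence directly. I would recommend replacing your APT sketch with this argument (or actually executing the APT programme, including the boundedness of $(\lambda_{t})_{t}$ and the decay $\gamma_{t},\theta_{t}\rightarrow 0$ that both routes tacitly use).
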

\begin{proof}
To show the first statement of the Theorem, notice that:
\begin{align}
&\mathcal{E}_{T}(x_{*},\lambda_{*})\nonumber\\
&\leq\tau_{T}\left(\tfrac{\sum_{t=0}^{T}\gamma_{t}\xi_{t}(x_{*},\lambda_{*})}{\tau_{T}}+\tilde{C}_{1}\tfrac{\sum_{t=0}^{T}\gamma^{2}_{t}}{\tau_{T}}+\tfrac{\sum_{t=0}^{t}\gamma_{T}\psi_{t}}{\tau_{T}}\right),\label{Eq:aiaishhfggfhfhffff}
\end{align}
where $\tau_{T}:=\sum_{t=0}^{T}\gamma_{k}$,
\begin{align}
&\xi_{t}(x_{*},\lambda_{*}):=\inni{(X_{t},\lambda_{t})-(x_{*},\lambda_{*})}{\tilde{v}(X_{t},\lambda_{t})}_{\sim},\nonumber\\
\psi_{t}&:=\norm{\lambda_{t}}^{2}_{2}\left[\gamma_{t}\left(2\theta_{t}^{2}+\tilde{C}_{2}\right)-\tfrac{\theta_{t}}{2}\right]+\tfrac{\theta_{t}\norm{\lambda}^{2}_{2}}{2}\leq\tfrac{\theta_{t}\norm{\lambda}^{2}_{2}}{2},\label{Eq:aoaouudzzdzehehhee}
\end{align}
where the inequality in \eqref{Eq:aoaouudzzdzehehhee} follows by \eqref{Eq:aooajsjsjskkjdjjjddd}. Let be $U$ be an arbitrary neighborhood (w.r.t. a norm e.g. $\norm{\cdot}_{2}$) of $\VS(\X\times\real_{\geq 0}^{M},\tilde{v})$. Suppose that $(X_{t},\lambda_{t})\notin U$ for all sufficiently large $t\geq 0$. We may assume w.l.o.g. that $X_{t}\notin U$ for all $t\geq 0$. So for all $(x_{*},\lambda_{*})\in\VS(\mathcal{X}\times\real^{M}_{\geq 0},v)$, it follows that we can find $c>0$ s.t. $\xi_{t}(x_{*},\lambda_{*})\leq -c$, $\forall t\geq 0$.
This yields:
\begin{equation}
\label{Eq:aoaojdhhddjeeerrr}
\mathcal{E}_{T}(x_{*},\lambda_{*})\leq\tau_{T}\left(-c+\tilde{C}_{1}\tfrac{\sum_{t=0}^{T}\gamma^{2}_{t}}{\tau_{T}}+\tfrac{\sum_{t=0}^{t}\gamma_{t}\psi_{t}}{\tau_{T}}\right).
\end{equation}
\eqref{Eq:aiiahhssjjdhhdjjdjdsss} (resp. \eqref{Eq:aoaouudzzdzehehhee} and \eqref{Eq:aoaoshshsjjhdggdhhss}) gives that the second (resp. third) summand in \eqref{Eq:aoaojdhhddjeeerrr} converges to 0 as $t$ goes to infinity.
Finally, by the fact that $\tau_{T}\rightarrow\infty$ as $T\rightarrow\infty$, we have $\mathcal{E}_{T}(x_{*},\lambda_{*})\rightarrow-\infty,\quad n\rightarrow\infty$, which contradicts the fact that $\tilde{F}((x_{*},\lambda_{*}),(Y_{0},\lambda_{0}))$ is finite. Thus $(X_{t},\lambda_{t})\in U$ for infinitely many $t\geq 0$. 

To show the convergence of $(X_{t},\lambda_{t})$, i.e. the second statement of the Theorem, it is sufficient by $2.$ in Proposition \ref{Prop:aaisshhfjffjfjfff} to show that for all $\epsilon >0$, and:
\begin{equation*}
U_{\epsilon}:=\lrbrace{(x,\lambda):x=\Phi(y),\tilde{F}(\VS(\mathcal{X}\times \real^{M}_{\geq 0},\tilde{v}),(y,\lambda))<\epsilon}
\end{equation*}
$(X_{t},\lambda_{t})\in U_{\epsilon}$ for all but finite $t\in\nat$. Toward this end, we show that for sufficiently large $t$, $(X_{t},\lambda_{t})\in U_{\epsilon}$ implies $(X_{t+1},\lambda_{t+1})\in U_{\epsilon}$. Combining this fact with Lemma \ref{Lem:aoaosjsjshdhdjshssjjddd} yields finally the desired statement.

We have by $3.$ in Proposition \ref{Prop:aaisshhfjffjfjfff} and by $1.$ in Lemma \ref{Lem:aoaosjsjshdhdjshssjjddd}:
\begin{align}
&\tilde{F}((x_{*},\lambda_{*}),(Y_{t+1},\lambda_{t+1}))
\leq  \tilde{F}((x_{*},\lambda_{*}),(Y_{t+1},\lambda_{t+1}))\nonumber\\
&+\gamma_{t}\xi_{t}((x_{*},\lambda_{*}))+\gamma_{t}\psi_{t}+\gamma_{t}^{2}\tilde{C}_{1},\label{Eq:aoaosjsjsdjdjdhfgfhgfhdhdgfgf}
\end{align}
for a constant $\tilde{C}>0$. Suppose that $(X_{t},\lambda_{t})\in U_{\epsilon}$. By Assumption \ref{Ass:aoasjsdhhggehegeeeerrrttt} it holds that $U_{\epsilon/2}$ contains a neighborhood of $\VS(\mathcal{X}\times\real^{M}_{\geq 0},\tilde{v})$ (say w.r.t. $\norm{\cdot}_{2}$). Otherwise we have that any neighborhood of $\VS(\mathcal{X}\times\real^{M}_{\geq 0},\tilde{v})$ w.r.t. $\normi{\cdot}$ is not contained $U_{\epsilon/2}$, and since the image of $\Phi$ coincides with the domain of the subdifferential of $\psi$ and $\psi$ is subdifferentiable on the interior of $\X$, we can choose a sequence $(\tilde{Y}_{t})_{t}$ in $\mathcal{V}^{*}$ and a sequence $(\tilde{\lambda}_{n
t})$ in $\real^{M}_{\geq 0}$ satisfying $(\Phi(\tilde{Y}_{t}),\tilde{\lambda}_{t})\rightarrow \VS(\mathcal{X}\times\real^{M}_{\geq 0},\tilde{v})$ but $(\Phi(\tilde{Y}_{t}),\tilde{\lambda}_{t})\notin U_{\epsilon/2}$, i.e. $\tilde{F}(\VS(\mathcal{X}\times\real^{M}_{\geq 0},\tilde{v}),(\Phi(\tilde{Y}_{t}),\tilde{\lambda}_{t}))\geq \epsilon/2$. 
Since $\VS(\mathcal{X}\times\real^{M}_{\geq 0},\tilde{v})$ is closed, the latter contradicts with Proposition \ref{Prop:aoaojsjsksjsjsjkaakaaks}. 

Now, an implication of the fact that $U_{\epsilon/2}$ contains a neighborhood of $\VS(\mathcal{X}\times\real^{M}_{\geq 0},\tilde{v})$ is that for all $(x_{*},\lambda_{*})\in\VS(\mathcal{X}\times\real^{M}_{\geq 0},\tilde{v})$, there exists $c>0$ s.t.
\begin{equation*}
\inni{\tilde{v}(x,\lambda)}{(x,\lambda)-(x_{*},\lambda_{*})}\leq -c,\quad \forall x\in U_{\epsilon}\setminus U_{\epsilon/2}
\end{equation*}
So, if $X_{n}\in U_{\epsilon}\setminus U_{\epsilon/2}$ \eqref{Eq:aoaosjsjsdjdjdhfgfhgfhdhdgfgf} yields:
\begin{align*}
&\tilde{F}((x_{*},\lambda_{*}),(Y_{t+1},\lambda_{t+1}))\\
&\leq  \tilde{F}((x_{*},\lambda_{*}),(Y_{t},\lambda_{t}))+\gamma_{t}(-c+\psi_{t}+\gamma_{t}\tilde{C}_{1}),
\end{align*}
By \eqref{Eq:aoaouudzzdzehehhee}, we have $\abs{\psi_{t}}\rightarrow 0$ for $t\rightarrow\infty$ and thus for large enough $t\in\nat$, there exists $\tilde{c}>0$ s.t. $\tilde{F}((x_{*},\lambda_{*}),(Y_{t+1},\lambda_{t+1}))\leq \tilde{F}((x_{*},\lambda_{*}),(Y_{t},\lambda_{t}))+\gamma_{t}(-\tilde{c}+\gamma_{t}\tilde{C}_{1})$. 
and consequently for sufficiently large $t\in\nat$ s.t. the inequality before and $\gamma_{t}\leq \tilde{c}/\tilde{C}$ holds, we have $\tilde{F}(\VS(\X\times\real^{M}_{\geq 0}),(Y_{t+1},\lambda_{t+1}))\leq  \tilde{F}(\VS(\X\times\real^{M}_{\geq 0}),(Y_{t},\lambda_{t})))<\epsilon$,
since by assumption $(X_{t},\lambda_{t})\in U_{\epsilon}$. 

If $(X_{t},\lambda_{t})\in U_{\epsilon/2}$, then it follows from \eqref{Eq:aoaosjsjsdjdjdhfgfhgfhdhdgfgf}:
\begin{align*}
&\tilde{F}((x_{*},\lambda_{*}),(Y_{t+1},\lambda_{t+1}))\\
&\leq \tilde{F}((x_{*},\lambda_{*}),(Y_{t+1},\lambda_{t+1}))+\gamma_{t}\psi_{t}+\gamma_{n}^{2}\tilde{C}.
\end{align*}
Thus for sufficiently large $t\in\nat$ s.t. $\gamma_{t}\psi_{t}+\gamma_{t}^{2}\tilde{C}< \epsilon/2$, we have $\tilde{F}(\VS(\X\times\real^{M}_{\geq 0}),(Y_{t+1},\lambda_{t+1}))<\epsilon$.
We are done by combining all the observations.
\end{proof}
The convergence $(X_{t})_{t}$ is now immediate:
\begin{corollary}
Suppose that the assumptions given in Theorem \ref{Lem:aoaosjsjshdhdjshssjjddd} holds and suppose that $\VS(\mathcal{Q},v),\VS(\X\times\real^{M}_{\geq 0},\tilde{v})\neq \emptyset$. Then $X_{n}\rightarrow \VS(\mathcal{Q},v)$.
\end{corollary}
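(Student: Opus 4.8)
The plan is to reduce the claim to the convergence already proved in the extended (decoupled) setting. Theorem~\ref{Lem:aoaosjsjshdhdjshssjjddd} gives $(X_t,\lambda_t)\to\VS(\X\times\real^M_{\geq 0},\tilde v)$, so it suffices to show that discarding the dual coordinate $\lambda_t$ lands $X_t$ in $\VS(\mathcal{Q},v)$. Concretely, writing $\pr_{\X}$ for the projection onto the primal component, I would establish the set identity $\pr_{\X}\big(\VS(\X\times\real^M_{\geq 0},\tilde v)\big)=\VS(\mathcal{Q},v)$ and then push the set convergence of Theorem~\ref{Lem:aoaosjsjshdhdjshssjjddd} through the (Lipschitz-)continuous map $\pr_{\X}$.

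First I would assemble the identity using the machinery of Section~\ref{Sec:aoaosjjskdjdjdkkdd}. Since both $\VS(\mathcal{Q},v)$ and $\VS(\X\times\real^M_{\geq 0},\tilde v)$ are assumed non-empty, Proposition~\ref{Prop:aoaoajsjsksjjdddddd} replaces them by the corresponding solution sets, i.e. $\VS(\mathcal{Q},v)=\SOL(\mathcal{Q},v)$ and $\VS(\X\times\real^M_{\geq 0},\tilde v)=\SOL(\X\times\real^M_{\geq 0},\tilde v)$. Under the standing regularity Assumption~\ref{Ass:aiishhfggfjhhdjjdhhdjjddd} and the Slater condition Assumption~\ref{Eq:aioaoosjjshhdjjddss}, Proposition~\ref{Prop:aiaisshshjdhdggddd} asserts precisely that $\overline{x}\in\SOL(\mathcal{Q},v)$ if and only if there exists $\overline{\lambda}\in\real^M_{\geq 0}$ with $(\overline{x},\overline{\lambda})\in\SOL(\X\times\real^M_{\geq 0},\tilde v)$; equivalently $\pr_{\X}\big(\SOL(\X\times\real^M_{\geq 0},\tilde v)\big)=\SOL(\mathcal{Q},v)$. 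Chaining these three equalities yields the key identity
\begin{equation*}
\pr_{\X}\big(\VS(\X\times\real^M_{\geq 0},\tilde v)\big)=\VS(\mathcal{Q},v).
\end{equation*}

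Next I would transport the convergence. Theorem~\ref{Lem:aoaosjsjshdhdjshssjjddd} says $\mathrm{dist}\big((X_t,\lambda_t),\VS(\X\times\real^M_{\geq 0},\tilde v)\big)\to 0$. For any $(x_*,\lambda_*)$ in the extended stable set, the coordinate projection is non-expansive, so $\normi{X_t-x_*}\leq\|(X_t,\lambda_t)-(x_*,\lambda_*)\|$; taking the infimum over the stable set and invoking the identity above gives $\mathrm{dist}\big(X_t,\VS(\mathcal{Q},v)\big)\leq\mathrm{dist}\big((X_t,\lambda_t),\VS(\X\times\real^M_{\geq 0},\tilde v)\big)$, whose right-hand side tends to $0$. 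This is exactly $X_t\to\VS(\mathcal{Q},v)$.

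The only delicate step is the identity $\pr_{\X}(\VS(\X\times\real^M_{\geq 0},\tilde v))=\VS(\mathcal{Q},v)$: it depends essentially on both stable sets being non-empty (so that Proposition~\ref{Prop:aoaoajsjsksjjdddddd} may convert them into solution sets) and on the Slater-type Assumption~\ref{Eq:aioaoosjjshhdjjddss} that powers the KKT argument behind Proposition~\ref{Prop:aiaisshshjdhdggddd}. Once that identity is in hand, the convergence itself is a soft consequence of the non-expansiveness of $\pr_{\X}$ and requires no further estimates.
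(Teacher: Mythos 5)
Your proof is correct and follows essentially the same route as the paper's: invoke Theorem \ref{Lem:aoaosjsjshdhdjshssjjddd} for convergence of $(X_t,\lambda_t)$ to the extended stable set, use Proposition \ref{Prop:aoaoajsjsksjjdddddd} (non-emptiness) to identify the variationally stable sets with the solution sets, and use the KKT equivalence of Proposition \ref{Prop:aiaisshshjdhdggddd} to project down to $\SOL(\mathcal{Q},v)=\VS(\mathcal{Q},v)$. Your explicit final step transporting the set convergence through the non-expansive coordinate projection is a detail the paper leaves implicit, and you have in fact untangled the paper's swapped citations of these two propositions correctly.
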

\begin{proof}
Theorem \ref{Lem:aoaosjsjshdhdjshssjjddd} asserts that $(X_{t},\lambda_{t})\rightarrow\VS(\X\times \real_{\geq 0}^{M},\tilde{v})$. Proposition \ref{Prop:aiaisshshjdhdggddd} and the assumption $\VS(\X\times\real^{M}_{\geq 0},\tilde{v})\neq \emptyset$ implies that $\VS(\X\times \real_{\geq 0}^{M},\tilde{v})=\SOL(\X\times \real_{\geq 0}^{M},\tilde{v})$. Moreover Proposition \ref{Prop:aoaoajsjsksjjdddddd} asserts that $(\overline{x},\overline{\lambda})\in\SOL(\X\times \real_{\geq 0}^{M},\tilde{v})$ implies that $\overline{x}\in \SOL(\X\times \real_{\geq 0}^{M},v)$. Finally, since $\VS(\mathcal{Q},v)\neq\emptyset$ we have $\VS(\mathcal{Q},v)=\SOL(\mathcal{Q},v)$.
\end{proof}
At last, let us provide an example of sequences $(\gamma_{t})$ and $(\theta_{t})$ which fulfills the condition given in Theorem \ref{Lem:aoaosjsjshdhdjshssjjddd}:
\begin{remark}
Suppose that $\gamma_{t}=1/(t+1)$, then the condition \eqref{Eq:aiiahhssjjdhhdjjdjdsss} is fulfilled since $\sum_{k=1}^{\infty}\gamma_{k}=\infty$ and $\sum_{k=1}^{\infty}\gamma_{k}^2<\infty$. For $\theta_{t}=\delta\gamma_{t}$, we have:
\begin{equation*}
\gamma_{t}(2\theta_{t}^{2}+\tilde{C}_{2})-\tfrac{\theta_{t}}{2}=\gamma_{t}\left[2\delta\gamma_{t}^{2}+\tilde{C}_{2}-\tfrac{\delta}{2}\right].
\end{equation*} 
In case $\delta>2\tilde{C}_{2}$, we can find $c>0$ s.t.:
\begin{equation*}
\gamma_{t}(2\theta_{t}^{2}+\tilde{C}_{2})-\tfrac{\theta_{t}}{2}\leq\gamma_{t}\left[2\delta\gamma_{t}^{2}-c\right].
\end{equation*}
The latter is clearly negative for $t$ sufficiently large.
\end{remark}

\section{Discussions and Outlooks}
In this work we have introduce a novel semi-decentralized algorithm for concave games with coupled constraints based on mirror ascent and the method of augmented lagrangian. We provide a sufficient condition on the step size sequence and the degree of augmentation such that the algorithm converges to variationally stable Nash equilibrium. Specific choices of step-size sequence and augmentation sequence for that purpose is also provided. In particular, step size of order $\gamma_{t}=\mathcal{O}(1/t)$ leads to this desired state.

In the future work we plan also to investigate the case where the step-size - and augmentation sequence is adaptive. Moreover it is interesting to know whether the case where step size sequences of the agents differ. 

Another interesting line of work is to investigate whether the algorithm is robust toward random disturbance. That is to investigate how the algorithm performance if the feedback obtained by the agents is an unbiased martingale estimate of the gradient.

We also plan to derive based on the algorithm given in this work an algorithm which ensures not only compliance in the asymptotic region but also in the non-asymptotic region.


\bibliographystyle{IEEEtran}

\end{document}